\renewcommand{\d}{\mathrm{d}}
\newcommand{\up}{u_p}
\newcommand{\ue}{u_e}
\newcommand{\mup}{\mu_p}
\newcommand{\mue}{\mu_e}
\newcommand{\Tp}{\mathcal{T}}
\newcommand{\I}{\mathcal{I}}
\newcommand{\T}{^{\mbox{\tiny \sf T}}}
\newcommand{\R}{\mathbb{R}}
\DeclareMathOperator{\E}{\mathbb{E}}
\newtheorem{theorem}{Theorem}
\newtheorem{lemma}{Lemma}
\newtheorem{corollary}{Corollary}
\newtheorem{remark}{Remark}
\newtheorem{assumption}{Assumption}
\newtheorem{example}{Example}
\renewenvironment{proof}{\textit{Proof:}}{\hfill $\blacksquare$}
\title{\LARGE\bf Efficient Communication for Pursuit-Evasion Games with Asymmetric Information}
\author{Dipankar Maity
 \thanks{D. Maity is with the Department of Electrical and Computer Engineering, University of North Carolina at Charlotte,  NC, 28223, USA. Email:
 		{\small \texttt{dmaity@uncc.edu}}}
\thanks{The work was supported, in parts, by by  ARL grant ARL DCIST CRA W911NF-17-2-0181. }
   }
\date{}
\begin{document}

\maketitle

\begin{abstract}
     We consider a class of pursuit-evasion differential games in which the evader has continuous access to the pursuer's location but not vice-versa. There is an immobile sensor (e.g., a ground radar station) that can sense the evader's location and communicate that information intermittently to the pursuer. Transmitting the information from the sensor to the pursuer is costly and only a finite number of transmissions can happen throughout the entire game. The outcome of the game is determined by the control strategies of the players and the communication strategy between the sensor and the pursuer. We obtain the (Nash) equilibrium control strategies for both the players as well as the optimal communication strategy between the static sensor and the pursuer.
     We discuss a dilemma for the evader that emerges in this game.
     We also discuss the emergence of implicit communication where the absence of communication from the sensor can also convey some actionable information to the pursuer.
\end{abstract}

\section{Introduction}

Pursuit-Evasion games \cite{isaacs1999differential}  have been applied to investigate a wide class of civilian and military applications involving multi-agent interactions in adversarial scenarios \cite{yan2016multi,  robotics9020047, inproceedings}. 
While several variations ranging from complex dynamic models for the players (e.g., \cite{sun2017multiple}) to complex geometry of the environment (e.g., \cite{oyler2016pursuit}) to limited visibility of the players (e.g., \cite{bhattacharya2010existence}) have been considered, one of the prevailing assumptions have been the continuous sensing capability for the players, with the exception of \cite{aleem2015self, maity2016strategies, huang2021defending} among few others. 
By `continuous sensing' we refer to the capability that enables the players to keep their sensors turned on continuously for the entire duration of the game.
Extensions of pursuit-evasion games in the context of sensing limitations have mainly considered limited sensing range e.g., \cite{durham2010distributed}, limited field of view e.g.,  \cite{gerkey2006visibility}, and noisy measurements e.g., \cite{bagchi1981linear}, but the challenges associated with the lack of continuous sensing remains unsolved.  

In this paper, we  revisit the classical linear-quadratic pursuit-evasion game where the pursuer does not have a continuous sensing capability. 
In particular, the pursuer relies on a remotely located sensor (e.g., a radar station) to sense the evader's position. 
Upon request, the remote sensor can perfectly sense the location of the evader and share it with the pursuer.\footnote{
One may alternatively also consider a scenario where the pursuer has an onboard sensor to measure the evader's location, however, it can only use the sensor intermittently due to, for example, energy and computational constraints.
} 
The communication channel between the pursuer and the remote sensor is assumed to be noiseless, instantaneous (i.e., no delay), and perfectly reliable (i.e., no packet losses). 
The pursuer intermittently requests the evader's location to update its pursuit strategy. 
Due to the resource (e.g., energy) constraints, the pursuer can only make a finite number of requests and it aims to minimize the number of communications.
On the other hand, the evader is able to sense the pursuer continuously and is aware of the sensing limitation of the pursuer. 
The objective of this work is to analyze the game under this asymmetric sensing limitation and obtain: (i) the optimal communication times between the sensor and the pursuer and, (ii) the equilibrium control strategies for the pursuer and the evader. 
It is noteworthy that the majority of the existing work not only considers continuous sensing (or no sensing at all) but also often assumes that the sensing capability is superior for the pursuer; e.g., perfect measurement for the pursuer and noisy measurement for the evader \cite{behn1968class, rhodes1969differential}.

To the best of our knowledge, some of the earliest works involving intermittent measurements for linear quadratic differential games were discussed in \cite{maity2016optimal, maity2016strategies}  where both the players had access to only intermittent measurements. 
These players, however, had to come to an agreement on the sensing instances, which was solved by an optimization problem.
This restriction was later relaxed in \cite{maity2017linear}.
These works were also extended to discrete time \cite{maity2017linear}, infinite horizon \cite{maity2017asymptotic},  and recently to asset defense scenarios \cite{huang2021defending}. 
In this paper, we consider a three-player problem involving a pursuer, an evader, and a remote sensor, where the remote sensor is a passive player that does not make any decision about the sensing times. 
The pursuer intermittently communicates with the remote sensor which results in an intermittent sensing capability for the pursuer. 
This problem has similarity with the sampled-data $H_\infty$ optimal control problem studied in \cite{bacsar1991game}. 
In this paper, we not only consider an intermittent sensing (equivalent to the sampled-data) framework, but also we obtain the optimal time instances for the sensing/communication. 

Intermittent sensing/communication has been a subject of active research in networked control systems. 
Efficient frameworks such as event- and self-triggered controls have been developed to reduce the communication/sensing needs. 
These methods have also been used for studying games, e.g., \cite{cai2020distributed, liu2023predefined}. 
Although these methods use intermittent sensing/communication,  they do not study the optimality of the sensing/communication strategy, except in some special cases, e.g., \cite{molin2009lqg, maity2019optimal}. 
This is primarily due to the fact that intermittent sensing/communication makes the problem analytically and computationally intractable. 
In contrast to those works, our objective in this paper is to find the optimal number of required communications as well as the optimal communication times.


The rest of the paper is organized as follows: In Section~\ref{sec:formulation}, we describe the linear-quadratic pursuit-evasion game problem.
In Section~\ref{sec:closed-and-open-loop}, we discuss the closed- and open-loop equilibria for this game. 
The pursuer has continuous access to the game state in the closed-loop case whereas, in the open-loop case, the pursuer only knows the initial state of the game. 
Next, we study the game with intermittent communication in Section~\ref{sec:gameAnalysis}. 
We derive the optimality conditions for the communication as well as the equilibrium control strategies for the pursuer and the evader. 
Finally, we discuss some open problems  and conclude the paper in Section~\ref{sec:conclusion}.

\section{Problem Formulation} \label{sec:formulation}

We consider a linear quadratic pursuit-evasion game where the state of the game follows the dynamics
\begin{align} \label{eq:dyn}
    \dot{x}(t) = A x(t) + B \up(t) + C \ue(t),\qquad x(t_0) = x_0,
\end{align}
where $x(t)\in \R^{n_x}$ is the state at time $t$, $\up(t)\in \R^{n_p}$ and $u_e(t) \in \R^{n_e}$ are the inputs of the pursuer and the evader, respectively, at time $t$. 
The initial state $x_0$ is known to both the players at the beginning of the game.
The payoff function for this game is 
\begin{align} \label{eq:Jbar}
\bar{J} &= \int_{t_0}^{t_f} \big( \|x(t)\|^2_{Q} + \|\up(t)\|^2_{R_p} - \|\ue(t)\|^2_{R_e}\big) \d t \nonumber  \\
&\quad + \|x(t_f)\|^2_{Q_f},   
\end{align}
where $R_p, R_e \succ 0$, and  $Q, Q_f \succeq 0$ are symmetric matrices.


Although the game is deterministic, the payoff may not necessarily be deterministic if the players adopt randomized strategies for their inputs. 
Therefore, in the subsequent section, wherever it is appropriate, we will consider the expected value of the payoff function, i.e., $J \triangleq \E_{\sim(\mup, \mue)}[\bar J]$, where $\mu_p$ and $\mu_e$ denote the strategies of the pursuer and the evader, respectively. 
$\E_{\sim(\mup, \mue)}[\cdot]$ denotes expectation with respect to the randomization induced by $\mup$ and $\mue$.

The strategies $\mup$ and $\mue$ are measurable functions of the information sets of the players. 
To describe the information sets of the players, we first denote $m(t)$ to be the total number of sensing requests upto time $t$ and $\Tp(t) \triangleq \{t_0, t_1, \ldots, t_{m(t)}\}$ be the set of sensing instances upto time $t$, where $t_0$ is the initial time,  and $t_i< t_{i+1}$ for all $i$, and $t_{m(t)}$ is the latest sensing time such that $t_{m(t)}\le t$.
Furthermore, we denote $\I_e(t) \triangleq \{x(s), \Tp (t) ~|~ s\le t \}$ to be the information available to the evader at time $t$ and $\I_p(t) \triangleq \{x(s'), \Tp(t)~|~s'\in \Tp(t)\}$ to be the pursuer's available information.

In the subsequent sections we will suppress the time argument, as much as possible, for notational brevity, e.g., we shall use $x$ instead of $x(t)$. 

\section{Closed-loop and Open-loop Equilibria} \label{sec:closed-and-open-loop}

In this section, we revisit the classical results on close-loop  equilibrium and also discuss the open-loop case for the pursuer. 
To that end, let matrix $P$ follow the Riccati equation
\begin{align} \label{eq:riccati}
    &\dot{P} + A\T P + PA + Q + P(CR_e^{-1}C\T - BR_p^{-1} B\T)P = 0, \nonumber \\
    &P(t_f) = Q_f.
\end{align}
The solution to the Riccati equation is  well defined for all $t\le t_f$ if the assumption $CR_e^{-1}C\T - BR_p^{-1} B\T \prec 0$ is satisfied. 
Without this condition, the Riccati equation may have a finite escape time (conjugate point) in the interval $[t_0, t_f]$ and the  cost $\bar J$ becomes infinity. 
\begin{assumption} \label{asm:controllability}
The matrices are chosen such that $CR_e^{-1}C\T - BR_p^{-1} B\T \prec 0$.
\end{assumption}

Loosely speaking, the above assumption implies that the pursuer has more controllability than the evader, as also discussed in \cite{behn1968class}.

Provided the solution to \eqref{eq:riccati} is well defined for the interval $[t_0, t_f]$ --which is now guaranteed by Assumption~\ref{asm:controllability}-- one may verify  that $\bar{J}$ in \eqref{eq:Jbar} can be rewritten as (e.g., see \cite[Theorem~3.1]{maity2016strategies})
\begin{align} \label{eq:Jbar_squared}
    \bar{J} = &\|x_0\|^2_{P_0} + \int_{t_0}^{t_f} \|\up + R_p^{-1}B\T P x\|^2_{R_p}\d t- \nonumber \\
    & \int_{t_0}^{t_f}\|\ue - R_e^{-1}C\T P x\|^2_{R_e} \d t,
\end{align}
where we defined $P_0 \triangleq P(t_0)$. 
When both the players have access to the state for all $t$, we observe from \eqref{eq:Jbar_squared} that the (Nash) equilibrium inputs for the players are 
\begin{align*}
   & \up  = - R_p^{-1}B\T P x \triangleq \mup(\I_p) , \\
    & \ue = R_e^{-1}C\T P x  \triangleq \mup(\I_e).
\end{align*}
Substituting these strategies in the dynamics \eqref{eq:dyn} yields
\begin{align} \label{eq:xSol}
    x(t) = \Phi(t) x_0 
\end{align}
where the state transition matrix $\Phi$ satisfies
\begin{align} \label{eq:Phi}
    \dot\Phi = (A- BR_p^{-1}B\T P  + CR_e^{-1}C\T P ) \Phi,
\end{align}
with the initial condition $\Phi(t_0) = {I}$, where $I$ is the identify matrix.  
Let us define the `open-loop' input pair $(\bar\up, \bar\ue)$:
\begin{align} \label{eq:barStrategy}
\begin{split}
   & \bar\up = - R_p^{-1}B\T P \Phi x_0 , \\
    & \bar\ue = R_e^{-1}C\T P \Phi x_0.
    \end{split}
\end{align}
Using \eqref{eq:xSol} and \eqref{eq:Phi} we notice that $\bar\up(t) =  \up(t)$ and $\bar\ue(t) =  \ue(t)$ for all $t$. 
Therefore, it appears from \eqref{eq:barStrategy} that the players only need to know $x_0$ since everything else can be computed offline. 
If $(\bar\up, \bar\ue)$ is an equilibrium pair (producing the same payoff as the pair ($\up, \ue$)), then in our problem, it would imply that there is no need for communication between the remote sensor and the pursuer.  
However, $(\bar\up, \bar\ue)$ is not necessarily an equilibrium pair, even though it is derived from the equilibrium one $(\up, \ue)$. 
To best illustrate this, we consider the following example.
\begin{example} \label{example:preliminary}
Consider the pursuit-evasion game with $x_p \in \R^2$ and $x_e \in \R^2$ denoting the states of pursuer and the evader, respectively. 
They have the dynamics $\dot x_p =\up$ and $\dot x_e = u_e$. 
The state of the game is $x = [x_p\T, x_e\T]\T$ and the objective function is 
\[
\bar J = \int_{0}^1 \big(\frac{1}{4} \|\up\|^2 - \frac{1}{2}\|u_e\|^2\big) \d t + \|x_p(1) - x_e(1)\|^2.
\]
The initial states for the players are $x_p(0) = [0, 0]\T$ and $x_e(0) = [1, 0]\T$. 

For this example, we will explicitly write $I_n$ and $0_n$ to denote, respectively, the identity and the zero matrices of dimension $n$.

In this example, $A = 0_4$, $B = \begin{bmatrix} I_2 \\ 0_2\end{bmatrix}$ and $C = -\begin{bmatrix} 0 \\ I_2\end{bmatrix}$,  and $Q_f = \begin{bmatrix}
    ~~I_2 & - I_2 \\
    -I_2 & ~I_2
\end{bmatrix} $, $R_e = \frac{1}{2}$, $R_p = \frac{1}{4}$, $t_0=0$, and $t_f=1$.  

By solving \eqref{eq:riccati}, we obtain
\[
P(t) = \frac{1}{3-2t}\begin{bmatrix}
    ~~I_2 & -I_2\\
    -I_2 & ~I_2
\end{bmatrix}.
\]
The open-loop inputs, defined in \eqref{eq:barStrategy}, are found to be
\begin{align*}
    \bar\up = \Big[
    \frac{4}{3},~~ 0\Big]\T, \qquad  {\bar\ue = \Big[
    \frac{2}{3}, ~~0\Big]\T}.
\end{align*}
The resulting payoff from the strategy pair $(\bar\up, \bar\ue)$ is $\bar J = \frac{1}{3}$.

When the pursuer's strategy is fixed to $\bar\up$, the evader has an incentive to choose a strategy different from $\bar \ue$. 
To see this, let us pick an arbitrary strategy $\ue = [-c, ~~0]\T$ where $c>0$ is some constant.
Therefore, $x_p(1) = [\frac{4}{3},~~0]\T$ and $x_e(1) = [1-c, ~~0]\T$, and the payoff from this strategy is 
\begin{align*}
    \bar J & = \int_0^1 \bigg(\frac{1}{4}\Big(\frac{4}{3}\Big)^2 - \frac{1}{2}c^2\bigg)\d t  + \bigg(c + \frac{1}{3}\bigg)^2 \\
    & = \frac{1}{2}c^2 + \frac{2}{3}c + \frac{5}{9}. 
\end{align*}
The resulting payoff is not bounded from above as a function of $c$. 
Since the evader is aiming to maximize $\bar J$, it can drive the payoff to infinity.  \hfill $\triangle$
\end{example}

Example~\ref{example:preliminary} demonstrates that the pursuer may not commit to the open-loop equivalent of the equilibrium strategy without sacrificing performance. 
Moreover, as shown in this example, the payoff resulting from such an open-loop execution can be arbitrarily high (i.e., payoff diverging to infinity).  
While the difference in the payoffs resulting from $\up$ and $\bar\up$ is infinite in the chosen example, one might ask whether it is always the case for any such pursuit-evasion games represented by \eqref{eq:dyn}-\eqref{eq:Jbar}.  
Furthermore, one might be interested in the payoff when the pursuer has intermittent access to the state $x(t)$. 
These are some of the questions that we investigate in the rest of the paper. 
More precisely, we answer the following fundamental questions: (i) Does there exist an intermittent communication strategy between the pursuer and the remote sensor that can ensure the same payoff as the one obtained from continuous communication? 
(ii) If such an intermittent communication strategy exists, what is the minimum number ($N_{\min}$) of required communications, and what are the optimal time instances ($t_i$'s) for the communications? 
and finally, (iii) If the available number of communications is less than $N_{\min}$, how much will the payoff degrade?

\section{Game under Intermittent Communication} \label{sec:gameAnalysis}

Since the communication between the remote sensor and the pursuer is intermittent instead of continuous, and the total number of communications is bounded, the evader may have an incentive to deviate from its earlier equilibrium strategy of $\ue = R_e^{-1}C\T Px$. 
Without loss of generality, let us assume that the evader follows
\begin{align} \label{eq:evaderStrategy}
    \ue = R_e^{-1}C\T Px + w,
\end{align}
where $w(t) \in \R^{n_e}$ is an input vector that is chosen by the evader strategy $\mue$. 
The input $w(t)$ can depend on the state and it is also allowed to be a random variable. 
Allowing $w(t)$ to be a random variable implies that the evader can employ randomized/mixed strategies. 

Our first objective in this section is to investigate the existence of an intermittent communication strategy with the same payoff as the continuous communication strategy. 
To that end, let $\{t_i\}_{i=1}^{N}$ denote the $N$ communication instances where $t_i < t_{i+1} < t_f$ for all $i=1,\ldots, (N\!-\!1)$. 
For national brevity, we further introduce $t_{N+1} = t_f$ and the $0$-th communication instance is defined to be the initial time of the game $t_0$.

\subsection{Existence of an Intermittent Communication Strategy} 
To show the existence of an intermittent strategy that performs equally good as the continuous one, we first compute the payoff for a given set of communication instances $\Tp = \{t_i\}_{i=0}^{N+1}$.
For a given $\Tp$, let the pursuer follow the strategy 
\begin{align}
    &\dot{\hat{x}} = (A\!-BR_p^{-1}B\T P \!+ CR_e^{-1}C\T P)\hat{x}, \quad \hat{x}(t_i) = x(t_i), \label{eq:hatdynamics1} \\
    &\up = - R_p^{-1}B\T P \hat{x}.  \label{eq:upOptimal}
\end{align}
Later we will prove that \eqref{eq:upOptimal} is indeed the optimal strategy for the pursuer. 

Given the evader strategy to be \eqref{eq:evaderStrategy} and the pursuer strategy to be \eqref{eq:upOptimal}, we have
\begin{align} \label{eq:newDynamics}
    \dot{x} = A_1x - BR_p^{-1}B\T P \hat{x} + C w,
\end{align}
where $A_1 \triangleq A+C R_e^{-1}C\T P$. 
Let $e(t) \triangleq x(t) - \hat{x}(t)$ denote the difference between the true state $x$ and the pursuer's estimate $\hat{x}$ at time $t$. 
Using \eqref{eq:newDynamics} and \eqref{eq:hatdynamics1}, one may verify that, for all $t\in [t_{i}, t_{i+1} )$ and for all $i = 0, \dots, N$, 
\begin{align} \label{eq:e}
\begin{split}
&\dot{e} = A_1 e + C {w} \qquad t\in [t_{i}, t_{i+1} ) \\
&e(t_i) = 0.
\end{split}
\end{align}

Substituting \eqref{eq:upOptimal} and \eqref{eq:evaderStrategy} in \eqref{eq:Jbar_squared} yields
\begin{align} \label{eq:J_simplified}
    \bar J = &\|x_0\|^2_{P_0} + \int_{t_0}^{t_f} \big(  \| R_p^{-1}B\T P e\|^2_{R_p} - \|w\|^2_{R_e}\big) \d t.
\end{align}
Since $w$ is potentially a random process, we  consider the expected payoff
\begin{align} \label{eq:J_expected}
    J  = &\|x_0\|^2_{P_0} + \int_{t_0}^{t_f}\!\!  \E_{\sim \mu_e} \big(\|e\|^2_{P B R_p^{-1} B\T P} - \|w\|^2_{R_e}\big) \d t,
\end{align}
where the expectation is with respect to the randomness introduced by the evader strategy $\mue$. 
The expected cost \eqref{eq:J_expected} can be rewritten as 
\begin{align} \label{eq:J_expectedDecomposed}
    J  = &\|x_0\|^2_{P_0} + \sum_{i=0}^N \int_{t_i}^{t_{i+1}}\!\!  \E_{\sim \mu_e} \big(\|e\|^2_{P B R_p^{-1} B\T P} - \|w\|^2_{R_e}\big) \d t.
\end{align}
Since,  $e(t_i)$ resets to zero regardless of the evader's strategy, the choice of $w$ in the interval $[t_i, t_{i+1})$ does not affect the costs $\int_{t_j}^{t_{j+1}}\!\!  \E_{\sim \mu_e} \big(\|e\|^2_{P B R_p^{-1} B\T P} - \|w\|^2_{R_e}\big) \d t$ for any $j\ne i$. 
Therefore, \eqref{eq:J_expectedDecomposed} is more amenable than \eqref{eq:J_expected} for computing the optimal $w$ that maximizes $J$.

Let us now consider the evader's optimal control problem for the interval $[t_i, t_{i+1})$, which we denote by $\textbf{OP}_i$:
\begin{align*}
    \textbf{OP}_i \quad &\max \int_{t_i}^{t_{i+1}}\!\!  \E_{\sim \mu_e} \big(\|e\|^2_{P B R_p^{-1} B\T P} - \|w\|^2_{R_e}\big) \d t, \\
    &\text{subject to  }   \eqref{eq:e}.
\end{align*}
In the following theorem we characterize the optimal solution and the optimal value for $\textbf{OP}_i$. 

\begin{theorem} \label{thm:main1}
Let $M$ satisfy the following Riccati equation
\begin{align} \label{eq:M_Riccati}
    &\dot{M} + A_1\T M + MA_1 - PBR_p^{-1} B\T P - MCR_e^{-1} C\T M = 0, \nonumber \\
    & M(t_{i+1}) = 0,
\end{align}
where $A_1 = A + CR_e^{-1} C\T P$.
The optimal value of $\textbf{OP}_i$ is zero if and only if $M(t)$ does not have an escape time in the interval $[t_i, t_{i+1})$, and the corresponding optimal input is $w(t) = 0$. 
On the other hand, if $M(t)$ has an escape time in the interval $[t_i, t_{i+1})$, then the optimal value of $\textbf{OP}_i$ is unbounded from above. \hfill $\triangle$
\end{theorem}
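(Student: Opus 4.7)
The plan is to apply the completing-the-square technique standard for indefinite linear-quadratic optimal control, using a candidate value function of the form $V(e_0, t') = -e_0\T M(t') e_0$. The Riccati equation \eqref{eq:M_Riccati} is engineered so that, upon differentiating $e\T M e$ along the dynamics \eqref{eq:e} and substituting \eqref{eq:M_Riccati}, the integrand $\|e\|^2_{PBR_p^{-1}B\T P} - \|w\|^2_{R_e}$ assembles into a total derivative of $e\T M e$ plus a perfect square in $(w + R_e^{-1}C\T M e)$.

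For the no-escape branch, where $M$ exists on all of $[t_i, t_{i+1}]$, I would integrate this identity from $t_i$ to $t_{i+1}$ and invoke $e(t_i)=0$, $M(t_{i+1})=0$ to obtain the pathwise representation
\[
\int_{t_i}^{t_{i+1}} \big( \|e\|^2_{PBR_p^{-1}B\T P} - \|w\|^2_{R_e} \big) \d t = - \int_{t_i}^{t_{i+1}} \| w + R_e^{-1}C\T M e \|^2_{R_e} \d t.
\]
The right side is sample-wise nonpositive, so $J \le 0$ after taking expectation. The deterministic choice $w\equiv 0$ produces $e\equiv 0$ and attains $J=0$, so the supremum of $\textbf{OP}_i$ equals $0$. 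Uniqueness follows because equality forces $w = -R_e^{-1}C\T M e$ almost surely, which combined with $e(t_i)=0$ yields $e \equiv 0$ and then $w \equiv 0$ a.s.

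For the escape branch, let $\tau \in [t_i, t_{i+1})$ denote the escape time of $M$, and apply the same identity on each shortened interval $[t', t_{i+1}]$ with $t' \in (\tau, t_{i+1}]$ and free initial state $e(t')=e_0$; this yields a sub-supremum equal to $-e_0\T M(t') e_0$. Since the trivial choice $w \equiv 0$ on the sub-interval already delivers a nonnegative sub-cost, the sub-supremum is nonnegative, forcing $M(t') \preceq 0$ throughout the well-defined region; consequently the blow-up at $\tau$ must occur via a negatively diverging eigenvalue, so $-M(t')$ has a positive eigenvalue $\mu_{t'} \to +\infty$ as $t' \to \tau^+$ with associated unit eigenvector $v_{t'}$. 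Using dynamic programming, the supremum of $\textbf{OP}_i$ is lower-bounded by
\[
\sup_{w_1\text{ on }[t_i,t']} \big\{\text{cost}_{[t_i,t']}(w_1) + \mu_{t'} \|e(t')\|^2 \big\}
\]
when $e(t')$ is constrained parallel to $v_{t'}$. Steering $e(t_i)=0$ to $e(t') = \alpha v_{t'}$ via the minimum-energy open-loop control $w_1 = \alpha \hat w_{t'}$ produces cost on $[t_i,t']$ of the form $\alpha^2 c(t')$ with $c(t')$ uniformly bounded as $t' \to \tau^+$ (the controllability Gramian $\int_{t_i}^{t'} e^{A_1(t'-s)} C R_e^{-1} C\T e^{A_1\T(t'-s)} \d s$ stays nondegenerate at $t'=\tau$). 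Taking $t'$ close to $\tau$ so that $\mu_{t'}$ dominates $|c(t')|$, and then $\alpha \to \infty$, drives the lower bound to $+\infty$.

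The hard part is the escape branch, with two subtleties needing care: (i) the sign of the divergent eigenvalue of $M$, pinned down via the observation that the sub-supremum is always nonnegative (hence $M \preceq 0$ in the well-defined region); and (ii) the boundedness of the steering cost $c(t')$ as $t' \to \tau^+$, which relies on an implicit nondegeneracy of the controllability Gramian of $(A_1, C)$ over $[t_i, \tau]$. The no-escape branch, by contrast, is a routine completing-the-square computation once \eqref{eq:M_Riccati} is in hand.
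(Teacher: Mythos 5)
Your no-escape branch is exactly the paper's argument: differentiate $\|e\|^2_M$ along \eqref{eq:e}, use $e(t_i)=0$ and $M(t_{i+1})=0$ to kill the boundary terms, and complete the square so the integrand becomes a nonpositive perfect square in $w$; the only discrepancy is the sign inside the square (the paper writes $\|w - R_e^{-1}C\T M e\|^2_{R_e}$, your $\|w + R_e^{-1}C\T M e\|^2_{R_e}$ is what the stated Riccati equation actually yields), and it is immaterial since either maximizer, fed back through $e(t_i)=0$, gives $e\equiv 0$ and $w\equiv 0$. For the escape branch the paper offers no proof at all (``we omit this construction of $w$ due to page limitations''), so there is nothing to compare against; your construction --- establishing $M\preceq 0$ on the well-defined region via the nonnegativity of the sub-supremum, then steering $e$ into the divergent eigendirection at bounded quadratic cost and letting the terminal value $-e_0\T M(t')e_0$ dominate --- is the standard disturbance-attenuation argument and is a sensible way to fill the gap, though you are right to flag that the bounded steering cost tacitly requires the escape direction of $M$ to lie in the reachable subspace of $(A_1, C)$, a point that would need to be argued (blow-up of this Riccati equation can only occur along reachable directions) rather than assumed.
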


\begin{proof}
    Let us first consider the case that $M(t)$ does not have an escape time in the interval $[t_i, t_{i+1})$. 
    In this case, one may use 
    \begin{align*}
        0 & = \int_{t_i}^{t_{i+1}}\frac{\d}{\d t}(\|e\|^2_M)\d t\\
          & =  \int_{t_i}^{t_{i+1}} \big( e\T (\dot M + A_1\T M + M A_1) e + 2 e\T M C w \big) \d t
    \end{align*}
     to rewrite 
    \begin{align*}
        \int_{t_i}^{t_{i+1}}\!\!  &\big(\|e\|^2_{P B R_p^{-1} B\T P} - \|w\|^2_{R_e}\big) \d t \\
        &= - \int_{t_i}^{t_{i+1}}\!\!  \|w - R_e^{-1}C\T M e\|^2_{R_e} \d t ~~\le 0.
    \end{align*}
    Since $w$ must be chosen to maximize the optimal value of $\textbf{OP}_i$, we conclude that $w = R_e^{-1}C\T M e$ is the optimal choice according to the last equation. 
    Substituting $w = R_e^{-1}C\T M e$ in \eqref{eq:e}, we obtain $e(t) = 0$ and consequently $w(t) = 0$. 

   On the other hand, when $M(t)$ has a finite escape time during the interval $[t_i, t_{i+1})$, one may construct an input $w$ to show that the optimal value is unbounded. 
    We omit this construction of $w$ due to page limitations.
\end{proof}

Theorem~\ref{thm:main1} implies that, as long as the communication instances $t_i$'s are chosen such that $M$ is well defined in each interval, the  optimal choice for the evader is $w = 0$, this is in line with what was observed for the sampled-data $H_\infty$ optimal control problem in \cite{bacsar1991game}.
Theorem~\ref{thm:main1} essentially shows that $J = \|x_0\|^2_{P_0}$, i.e., the payoff is exactly the same as the one from the continuous communication case. 
Theorem~\ref{thm:main1} also states that, if any of the communication instance is chosen such that $M$ has a finite escape time, then the payoff becomes infinity. 
Therefore, the intermittent communication can either perform as good as the continuous communication or it will perform arbitrarily bad, but nothing in between. 

\begin{corollary} \label{corr:necessary}
A necessary condition for an intermittent communication strategy to produce a finite payoff is to ensure $M(t)$ has a well defined solution for each interval $[t_i, t_{i+1})$.
\end{corollary}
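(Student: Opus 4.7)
The corollary is essentially the contrapositive of the ``if'' direction of Theorem~\ref{thm:main1}, combined with the additive decomposition in \eqref{eq:J_expectedDecomposed}. My plan is to argue by contradiction: assume there exists an interval $[t_i,t_{i+1})$ on which the Riccati equation \eqref{eq:M_Riccati} has a finite escape time, and construct an evader strategy that drives the expected payoff $J$ to $+\infty$.

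First, I would invoke the decomposition
\[
J = \|x_0\|^2_{P_0} + \sum_{j=0}^N \int_{t_j}^{t_{j+1}} \E_{\sim\mue}\big(\|e\|^2_{P B R_p^{-1}B\T P} - \|w\|^2_{R_e}\big)\d t,
\]
from \eqref{eq:J_expectedDecomposed}. The key structural observation, already noted just after \eqref{eq:J_expectedDecomposed} and used in the proof of Theorem~\ref{thm:main1}, is that the error process resets to $e(t_j)=0$ at each communication instance; hence the choice of $w$ on the interval $[t_j,t_{j+1})$ affects only the $j$-th summand and none of the others. Consequently the evader can optimize each summand independently.

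Next, I would apply Theorem~\ref{thm:main1} to the problematic interval $[t_i,t_{i+1})$: since $M$ has a finite escape time there, the theorem guarantees the existence of a sequence of inputs $\{w^{(k)}\}_{k\in\bbN}$ supported on $[t_i,t_{i+1})$ along which the $i$-th summand diverges to $+\infty$. On every other interval $[t_j,t_{j+1})$, $j\neq i$, the evader can simply choose $w\equiv 0$, which contributes a finite value (in fact $0$, by the argument in the proof of Theorem~\ref{thm:main1}, if $M$ is well-defined there, and otherwise the same unboundedness argument only strengthens the conclusion). Combining these choices into a global strategy $\mue^{(k)}$, the resulting expected payoff $J(\mue^{(k)})\to+\infty$ as $k\to\infty$, so no intermittent communication schedule with an ill-defined $M$ on some interval can yield a finite supremum payoff for the evader, i.e.\ a finite game value.

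I expect this corollary to be essentially a one-paragraph observation; the only subtlety worth stating explicitly is the independence of the summands under the evader's choice of $w$, which licenses choosing a ``bad'' $w$ on the single offending interval without having to worry about compensating effects elsewhere. No real obstacle is anticipated because Theorem~\ref{thm:main1} has already done the heavy lifting of analyzing a single interval.
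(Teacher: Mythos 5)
Your argument is correct and follows exactly the route the paper intends: the paper's proof is a one-line appeal to Theorem~\ref{thm:main1} together with the preceding discussion of the decomposition \eqref{eq:J_expectedDecomposed} and the independence of the summands, which is precisely what you spell out. No gap; you have simply made explicit what the paper leaves implicit.
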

\begin{proof}
    The proof follows directly from Theorem~\ref{thm:main1} and the preceeding discussion. 
\end{proof}

\begin{remark}
Given the $(i+1)$-th communication time $t_{i+1}$ and the terminal condition $M(t_{i+1}) = 0$, there exists a finite duration $\delta$ such that $M$ does not have a finite escape time in the interval $[t_{i+1}-\delta,~ t_{i+1})$.\footnote{This is due to the local Lipschitz property of the function $f(M)=-A_1\T M - MA_1 + PBR_p^{-1} B\T P + MCR_e^{-1} C\T M$. } 
Therefore, the inter-communication duration $(t_{i+1} - t_i)$ is lower bounded by $\delta$. 
This implies that the total number of communications is finite for a finite interval $[t_0, t_f)$.
\end{remark}

In summary, there always exists an intermittent communication strategy that produces the same payoff as the continuous communication case. 
Furthermore, the sensing instances must satisfy the necessary condition given in Corollary~\ref{corr:necessary}.
This necessary condition is related to the finite escape times of a certain Riccati equation. 
While there exists several intermittent strategies satisfying this necessary condition, next we investigate the optimal intermittent communication strategy that incurs the least number of communications. 
Before proceeding to the next section, we conclude this secstion by formally showing that the assumed strategy of the pursuer in \eqref{eq:upOptimal} is an equilibrium strategy. 

\begin{lemma}
The equilibrium strategy for the pursuer is:
\begin{align*}
    &\dot{\hat{x}} = (A\!-BR_p^{-1}B\T P \!+ CR_e^{-1}C\T P)\hat{x}, \quad \hat{x}(t_i) = x(t_i),  \\
    &\up = - R_p^{-1}B\T P \hat{x}(t).  
\end{align*}
\end{lemma}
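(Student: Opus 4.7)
The plan is to verify the Nash equilibrium property by pairing this lemma with Theorem~\ref{thm:main1}. Theorem~\ref{thm:main1} already establishes one half: when the pursuer uses the proposed estimator-based strategy \eqref{eq:upOptimal}, the evader's best response is $w=0$, i.e., $\ue = R_e^{-1} C\T P x$, producing payoff $\|x_0\|^2_{P_0}$. What remains is to show the converse: if the evader is fixed to $\ue = R_e^{-1} C\T P x$, then the pursuer cannot improve over \eqref{eq:upOptimal}.

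First, I would substitute $\ue = R_e^{-1} C\T P x$ into the completed-squares expression \eqref{eq:Jbar_squared}. The evader's integral vanishes identically, leaving
\begin{align*}
\bar J = \|x_0\|^2_{P_0} + \int_{t_0}^{t_f} \|\up + R_p^{-1} B\T P x\|^2_{R_p} \d t.
\end{align*}
Since the $R_p$-weighted term is pointwise nonnegative, the pursuer's optimization reduces to driving this integrand to zero whenever possible, subject only to the measurability constraint $\up(t)\in \sigma(\I_p(t))$.

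Next, I would argue that the pursuer can in fact achieve zero integrand with only the information in $\I_p$. The key observation is that, with the evader's strategy fixed to the deterministic feedback $\ue = R_e^{-1} C\T P x$, the closed-loop state dynamics become $\dot x = A_1 x + B\up$ with $A_1 = A + CR_e^{-1} C\T P$, which contains no stochastic or unknown terms beyond $\up$ itself. Because the pursuer chooses $\up$ and knows $x(t_i)$ at each communication instance, they can exactly reconstruct $x(t)$ on $[t_i, t_{i+1})$ by propagating the known ODE $\dot{\hat x} = A_1 \hat x + B\up$ with $\hat x(t_i) = x(t_i)$. The candidate \eqref{eq:upOptimal} chooses $\up = -R_p^{-1} B\T P \hat x$, which substituted back gives precisely the estimator dynamics in the lemma statement, and it makes the estimation error $e = x - \hat x$ identically zero (by \eqref{eq:e} with $w=0$). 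Hence $\up + R_p^{-1} B\T P x = 0$ and the integrand vanishes, so the pursuer attains the lower bound $\|x_0\|^2_{P_0}$. Combining this with Theorem~\ref{thm:main1} establishes the saddle-point property.

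The only subtle step is justifying that no non-anticipative strategy measurable with respect to $\I_p$ can outperform this reconstruction-based law. The argument is that the closed-loop dynamics under the fixed evader strategy render $x(t)$ a deterministic function of $(x(t_i),\up|_{[t_i,t]})$, so the pursuer's information on $[t_i,t_{i+1})$ is effectively full-state information; any other $\I_p$-measurable $\up$ would yield a strictly positive contribution to the $R_p$-weighted integral. This step is straightforward once the information flow is made explicit, and no additional escape-time analysis is needed here because the evader is playing $w=0$ and the trajectory is already well defined on $[t_0,t_f]$ by Assumption~\ref{asm:controllability}.
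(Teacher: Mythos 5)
Your proposal is correct and follows essentially the same route as the paper: fix the evader at $w=0$, reduce $\bar J$ to the completed-square form $\|x_0\|^2_{P_0} + \int_{t_0}^{t_f}\|\up + R_p^{-1}B\T P x\|^2_{R_p}\,\d t$, and verify that $\hat x(t)=x(t)$ so the candidate law attains the pointwise lower bound; you simply spell out the reconstruction and measurability details that the paper compresses into ``one may verify.'' The only ingredient the paper states that you leave implicit is the opening caveat that the communication instances are assumed to satisfy Corollary~\ref{corr:necessary} (otherwise the payoff is unbounded regardless of the pursuer's strategy), which your invocation of Theorem~\ref{thm:main1} tacitly requires.
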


\begin{proof}
    Without loss of any generality we assume that the communication instances are chosen such that they satisfy the necessary condition in Corollary~\ref{corr:necessary}; otherwise, the payoff is unbounded from above regardless of the pursuer's strategy.

    Since the evader's policy is to pick $w=0$ when the intermittent communication instances follow Corollary~\ref{corr:necessary}, we have 
    $\ue = - R_e^{-1}C\T P x$ and the payoff \eqref{eq:Jbar_squared} becomes
    \begin{align*}
        \bar J = \|x_0\|^2_{P_0} + \int_{t_0}^{t_f} \|\up + R_p^{-1}B\T P x\|^2_{R_p}\d t.
    \end{align*}
    At this point, one may verify that $\hat{x}(t) = x(t)$ for all $t$, and hence, $\up = -R_p^{-1}B\T P \hat x$ is the equillibrium strategy.
\end{proof}

\subsection{Optimal Communication Instances}

The discussion in this section will  focus on computing the optimal communication instances.
To that end, we shall discuss the finite escape times of $M$ defined in \eqref{eq:M_Riccati}.
Two major challenges in studying the escape times of \eqref{eq:M_Riccati} are: (i) $A_1$ is time varying since it depends on $P$, and (ii) $M$ depends on $P$, which follows another Riccati equation. 

To overcome these challenges we consider the matrix $\Pi = M - P$. 
Since $P$ does not have a finite escape time (due to Assumption~\ref{asm:controllability}), $M$ will have a finite escape time if and only if $\Pi$ does.
Next we verify that $\Pi$ also follows a Riccati equation: 
\begin{align}
    \dot{\Pi} & = \dot M - \dot P\nonumber \\ 
      &= - (A_1\T M + MA_1 - PBR_p^{-1} B\T P - MCR_e^{-1} C\T M) \nonumber \\
      &~ + A\T P + PA + Q + P(CR_e^{-1}C\T - BR_p^{-1} B\T)P \nonumber \\
      &= -A\T \Pi - \Pi A + Q + \Pi CR_e^{-1} C\T \Pi. \nonumber 
\end{align}

Notice that the dynamics equation of $\Pi$ does not depend on $P$ anymore, and all the matrices involved the equation of $\dot \Pi$ are time invariant. 
This form is important since we may readily use the results from \cite{sasagawa1982finite} that studies the escape time of Riccati equations of this form.
Next, we will study the finite escape time of $\Pi$ in the interval $[t_i, t_{i+1})$ with the boundary condition $\Pi(t_{i+1}) = - P(t_{i+1})$. 
The following theorem provides the optimal communication instances. 

\begin{theorem} \label{thm:main2}
    The $i$-th triggering instance is the escape time for $\Pi$ where 
    \begin{align*}
    &\dot{\Pi} + A\T \Pi + \Pi A - Q - \Pi CR_e^{-1} C\T \Pi = 0, \qquad t< t_{i+1},\\
    &\Pi (t_{i+1}) = - P(t_{i+1}), 
\end{align*}
   where $t_{N+1} = t_f$. 
\end{theorem}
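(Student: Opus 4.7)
The plan is to establish Theorem~\ref{thm:main2} by combining a greedy backward scheduling argument with the $\Pi = M - P$ reduction already carried out just before the statement. The claim that the $i$-th triggering instance equals the escape time of $\Pi$ is really two claims bundled together: (a) among all schedules that produce a finite payoff, the minimum-$N$ schedule is obtained by choosing each $t_i$ as early as possible given $t_{i+1}$, and (b) this earliest feasible $t_i$ is exactly the finite escape time of the time-invariant Riccati equation for $\Pi$ integrated backward from $\Pi(t_{i+1}) = -P(t_{i+1})$.

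First I would prove (b). The key reduction is already in hand: Assumption~\ref{asm:controllability} guarantees that $P(t)$ is bounded on $[t_0, t_f]$, so $M = \Pi + P$ and $\Pi$ share the same set of finite escape times. By Corollary~\ref{corr:necessary}, feasibility of $t_i$ amounts to $\Pi$ being well defined on $[t_i, t_{i+1})$, and the terminal condition $M(t_{i+1}) = 0$ translates into $\Pi(t_{i+1}) = -P(t_{i+1})$. Because $\dot\Pi$ depends only on the time-invariant matrices $A$, $Q$, $C$, $R_e$, one can invoke the escape-time characterization of \cite{sasagawa1982finite} to both guarantee existence of a finite escape time for this boundary value and to identify it explicitly. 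The earliest feasible $t_i$ is then precisely this escape time: any earlier choice would force $\Pi$ to blow up inside $[t_i, t_{i+1})$ and, by Theorem~\ref{thm:main1}, make the payoff unbounded.

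Next I would prove (a), the greedy/decomposition step. The crucial observation is that the Riccati equation for $\Pi$ on $[t_i, t_{i+1})$ is fully determined by its terminal value $-P(t_{i+1})$, and $P$ itself is the solution of an autonomous Riccati equation with fixed terminal condition $P(t_f)=Q_f$ that does not depend on the communication schedule. Thus the minimal feasible length of the $i$-th interval depends only on $t_{i+1}$, not on any other $t_j$. Given this decoupling, if some schedule places $t_i$ strictly later than the escape time, one may shift $t_i$ leftward without disturbing any other $t_j$; this can only enlarge $[t_{i-1}, t_i]$ relative to the escape-time constraint on the $(i{-}1)$-th interval, possibly allowing a further backward step to absorb $t_{i-1}$ and reduce $N$ by one. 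Iterating backwards from $t_{N+1} = t_f$ shows that the unique $N$-minimizing schedule is the one in which every $t_i$ coincides with the escape time for its interval.

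The hard part, I expect, is not the algebra but making the greedy argument in (a) rigorous: specifically, ruling out pathological schedules in which shifting one $t_i$ triggers a cascade of infeasibilities elsewhere. The decoupling of the $\Pi$-Riccati boundary conditions via the autonomy of the $P$-equation is what ultimately forecloses such cascades and reduces the global optimization over $N$ and $\{t_i\}$ to $N$ independent escape-time computations, completing the proof.
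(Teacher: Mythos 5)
Your overall strategy is the same as the paper's: construct the schedule greedily backward from $t_{N+1}=t_f$, identify the earliest feasible $t_i$ given $t_{i+1}$ with the escape time of the $\Pi$-Riccati equation, and argue that no feasible schedule can use fewer communications. The paper formalizes the last step as a proof by contradiction: it takes an arbitrary feasible schedule $\{t_i'\}_{i=1}^{N'}$ with $N'<N$ and shows by backward induction that $t_{N-j}\le t'_{N'-j}$, until the first interval of the primed schedule is forced to contain an escape time. Your ``shift $t_i$ leftward'' exchange argument is the same idea in local form.

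The gap is in what you offer as the justification for that exchange. The autonomy of the $P$-equation only shows that the feasibility constraint on $t_{i-1}$ is a function of $t_i$ alone (through the terminal condition $\Pi(t_i)=-P(t_i)$); it does \emph{not} show that this function is monotone. When you move $t_i$ to the left, the $(i-1)$-th interval is solved backward from a \emph{different} terminal pair $(t_i,\,-P(t_i))$, so feasibility of the old interval does not transfer to the new one ``for free'': a priori the new escape time could sit to the right of the old one and render $t_{i-1}$ infeasible, which is exactly the cascade you say the decoupling forecloses. What actually forecloses it is a Riccati comparison/monotonicity step -- the paper's assertion that integrating backward from $\Pi(t'_{N'})=-P(t'_{N'})$ with $t_N\le t'_{N'}$ yields $\Pi(t_N)\preceq -P(t_N)$, and that a terminal condition dominated in the Loewner order can only push the escape time earlier (not later). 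That comparison lemma is the real content of the optimality proof and is absent from your argument; naming and proving it (or citing it) is what you would need to close the proposal. A second, minor, point: a finite escape time need not exist in $[t_0,t_{i+1})$ at all -- that is precisely the case in which the recursion terminates and determines $N$ -- so you cannot ``guarantee existence of a finite escape time'' in general, only use it when it occurs above $t_0$.
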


\begin{proof}
   Let $\{t_i\}_{i=1}^{N}$ be the set of communication times obtained from Theorem~\ref{thm:main2} and let $\{t_i'\}_{i=1}^{N'}$ be an arbitrary sequence of communication instances that satisfy the necessary condition in Corollary~\ref{corr:necessary}.
   We prove this theorem by contradiction and, to that end, we assume that $N'< N$.
   Notice that $t_{N+1} = t_{N' + 1}' = t_f$. 
   Since $t_N$ is the escape time for the interval $[t_N, t_f)$, then we must have $t_N\le t_{N'}'$. 
   Now, starting from the terminal condition $\Pi(t_{N'}') = - P(t_{N'}')$, one may verify that $\Pi(t_N) \preceq P(t_N)$ and therefore, $\Pi$ will have an escape time before $t_{N-1}$ since $t_{N-1}$ was the escape time with the boundary condition $\Pi(t_N) = P(t_N)$. 
   Therefore, we conclude $t_{N-1} \le t_{N'-1}'$.
   Following an inductive argument, one obtain $t_1 \le t_1'$ and consequently, $N' \ge N$. Thus, the proof is complete.
\end{proof}

Notice that the triggering instances are computed backwards and the duration $(t_{i+1} - t_{i})$ is maximized while finding $t_i$ for a given $t_{i+1}$. 
Since $t_0$ is fixed and $t_1$ is the first triggering instance, we are guaranteed that $\Pi$ does not have a finite escape time in the interval $(t_0, t_1]$.  
Therefore, unless $t_0$ itself is an escape time, there is some slack in the choice of $t_1$, and $t_1$ can be increased to $t_1'$ without introducing an escape time within the new interval $(t_0, t_1']$. 
Given this new $t_1'$, one may increase $t_2$ to $t_2'$ while ensuring  $(t_1', t_2']$ does not contain an escape time.
Therefore, the optimal communication instances are non-unique unless $t_0$ is the escape time for $\Pi$ with the boundary condition $\Pi(t_1) = - P(t_1)$. 
Although there is non-uniqueness in the actual communication instances, the total number of required communication is unique.

\subsection{Evader's Dilemma}

At time $t_i$ (or, at an arbitrary time $t$ in general), the evader does not know when/whether the pursuer is going to request for communications. 
Therefore, if the evader picks $w=0$ for the interval $[t_i, t_{i+1})$ and the pursuer does not request for a communication, then the evader has lost the opportunity that would have given the evader a much higher (theoretically infinite) payoff if it did not select $w=0$. 
On the other hand, if the evader picked a non-zero $w$ and the pursuer  communicated with the sensor, then the evader will have incurred a loss in its payoff. 
Therefore, the evader has to make a decision first on whether it should use a non-zero $w$ without the pursuer having to commit to the next communication instance. 
This is an interesting dilemma from the evader's side where it has to pick between a high-risk-high-reward (i.e., the evader picks a nonzero arbitrarily high $w$) and a no-risk-no-reward (i.e., evader picks $w=0$) strategy. 
This dilemma will not occur in a slightly different scenario where the remote sensor can continuously sense the evader and the sensor imitates the communication instead of the pursuer requesting for it. 
In this case, the evader knows that if it picks a nonzero $w$, the sensor will communicate state at appropriate $t_i$'s and this will result in a worse payoff for the evader. 
Therefore, even in absence of a communication from the sensor, the pursuer is ensured that the evader is not able to get a payoff higher than $\|x_0\|^2_{P_0}$ and the pursuer is safe to continue with its strategy $\up = -R_p^{-1}B\T P \hat x$ without having to reset the value of $\hat x$. 
This implies an implicit communication in absence of a physical communication, which is in line with what was found for a linear-quadratic optimal control problem in \cite{maity2020minimal}.

    \begin{figure}
      \begin{subfigure}[c]{0.45 \linewidth}
      \includegraphics[trim = 10 15 20 30, clip,  scale = 0.36]{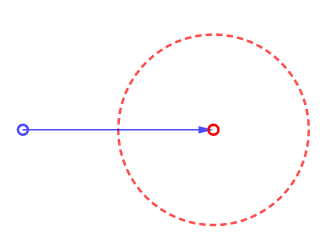}
      \end{subfigure}
      \begin{subfigure}[c]{0.35 \linewidth}
      \includegraphics[trim = 12 10 10 20, clip, scale = 0.4]{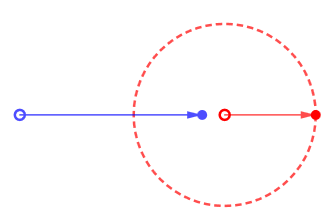}
      \put(-3,30) {$x_{\rm opt}$}
      \end{subfigure}

      \begin{subfigure}[c]{0.58 \linewidth}
      \includegraphics[trim = 12 15 25 20, clip, scale = 0.4]{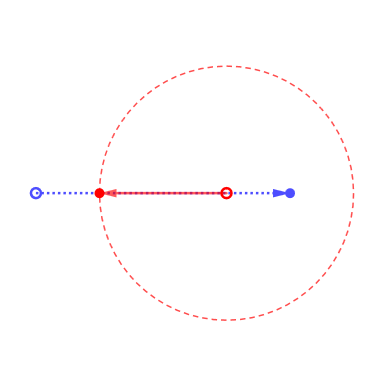}
      \put(-105,55) {$x_{\rm risky}$}
      \end{subfigure}
      \begin{subfigure}[c]{0.35 \linewidth}
      \includegraphics[trim = 12 15 10 20, clip, scale = 0.4]{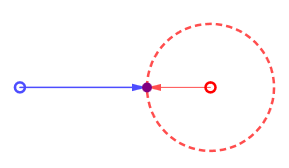}
      \put(-58,18) {$x_{\rm risky}$}
      \end{subfigure}
      \caption{\small In all the subfigures, hollow blue (red) points represent the initial location of the pursuer (evader). The solid blue (red) points represent the pursuer's (evader's) location at some time $t_1$.
      The dotted red circles denote the boundary of the evader's reachability region under the constraint that the control effort (measured by $\int_{t_0}^{t_1}\|\ue\|^2_{R_e}$) is less than or equal to the effort from the optimal controller.
      (Top left): $t_1 = \frac{3}{4}$ and the pursuer is at the evader's initial location. 
      (Top-right): $t_1 < \frac{3}{4}$ and the optimal location for the evader to be is at $x_{\rm opt}$.
      (Bottom left): $t_1 > \frac{3}{4}$ and the optimal location for the evader to be at $x_{\rm risky}$. 
      (Top-right): $t_1 = \frac{1}{2}$ the location $x_{\rm risky}$ coincides with the pursuer's location.
      }
      \label{fig:stateSpace}
  \end{figure}

  \subsection{Example~\ref{example:preliminary} Revisited}

In this section, we revisit Example~\ref{example:preliminary} and compute the communication instances according to Theorem~\ref{thm:main2}. 
By solving the Riccati equation for $\Pi$, we obtain 
\begin{align*}
    \Pi(t) = \frac{1}{3-4t_{i+1}+2t}\begin{bmatrix}
    -I_2 & ~~I_2\\
    ~I_2 & -I_2
\end{bmatrix}. \quad \forall t\in (t_i, t_{i+1}].
\end{align*}
Therefore, the $i$-th communication time is found by solving $3-4t_{i+1}+2t_i = 0$. 
Given $t_{N+1}=t_f = 1$, we obtain, $t_N = \frac{4t_{N+1}-3}{2} = \frac{1}{2}$. 
Similarly, given $t_N = \frac{1}{2}$, we obtain $t_{N-1} = \frac{4t_N - 3}{2} = - \frac{1}{2} < t_0$. 
Therefore, only one communication is needed and the communication occurs at time $t_1 = \frac{1}{2}$. 

As discussed after Theorem~\ref{thm:main2}, the communication instances are non-unique. In our example, one may choose any $t_1$ such that $t_1< \frac{3}{4}$. 
This is obtained by plugging in $t_0 = 0$ in the equation $3-4t_1 + 2t_0 > 0$. 

The conditions $t_1< \frac{3}{4}$ or $t_1 = \frac{1}{2}$ are obtained by solving an escape time problem and they do not provide much physical insights. 
To understand what these conditions physically imply in the context of this example, we use a graphical representation of the system in Fig.~\ref{fig:stateSpace}, where the blue hollow/filled dots represent the pursuer and the red ones representing the evader. 
Recall that the optimal (open-loop) input for the evader was $\ue = [\frac{2}{3},~0]\T$ and therefore, the control effort used in the interval $[0, t_1]$ is $\int_{0}^{t_1}\frac{1}{2}\|\ue\|^2 = \frac{2t_1}{9}$. 
    Let us consider the set of all control functions with a maximum control effort of $\frac{2t_1}{9}$ in the interval $[0,t_1)$, i.e., consider $\mathcal{U} =\{\ue~|~ \int_{0}^{t_1}\frac{1}{2}\|\ue\|^2 \le \frac{2t_1}{9}\}$.
    Given a $t_1$, the reachability set of the evader becomes a circle with radius $\frac{2t_1}{3}$ when the evader's input is restricted to the set $\mathcal{U}$. 
    These reachability sets (boundaries) are shown in the subfigures with dashed red circles.

    Notice that, for $\up = \big[
    \frac{4}{3},~~ 0\big]\T$, $t_1 = \frac{3}{4}$ is the time when the defender reaches the initial location of the evader, which is at $[1, ~0]\T$. 
    If the pursuer communicates with the sensor at this very moment (i.e., at $t=\frac{3}{4}$), then the payoff starting from this time is going to be the same regardless of where the evader is on the boundary of the dotted red circle. 
    On the other hand, if the pursuer communicates before time $\frac{3}{4}$, as illustrated in the top-right subfigure of Fig.~\ref{fig:stateSpace}, then the the evader must be at $x_{\rm opt}$ to start the next segment of the game, as also shown in that  subfigure. 

    If the pursuer communicates after $t=\frac{3}{4}$, as shown in the bottom-left subfigure, the optimal strategy for the evader is to go behind the pursuer and be at $x_{\rm risky}$, since any other point on the reachability circle will have a smaller distance from the pursuer. 
    The evader's strategy in Example~\ref{example:preliminary} was constructed based on this observation.

    On the other hand, at time $t=\frac{1}{2}$, the pursuer's location intersects with the evader's reachability circle at the point $x_{\rm risky}$. 
    Therefore, if the evader decided to go behind the defender, this would be the most vulnerable point in time. 

    Although the evader's input was restricted to the set $\mathcal{U}$ for the above discussion, a similar observation is noted also when this restriction is omitted.

\section{Conclusion and Discussions} \label{sec:conclusion}

In this work, we considered a class of linear-quadratic pursuit-evasion games where the pursuer relies on a remote sensor to measure the current state of game. 
The pursuer intermittently communicates with the sensor and the total number of communication is minimized. 
The optimal communication instances are found by solving for the finite escape times of a certain Riccati equation.
The optimal communication instances are, in general, non-unique. 
The evader is faced with a dilemma between taking a high-risk-high-reward and a no-risk-no-reward strategy. 

In this work, we assumed that the remote sensor can perfectly sense the game state and the communication between the sensor and the pursuer is ideal (i.e., no delay, no packet loss, and no transmission noise).
It would be interesting to study the problem with the communication being suffered from random delays. 
In that case, the pursuer may not receive the measurement in time to satisfy the escape time condition. 
Since there is generally some slack in the choice of the sensing instances, the pursuer may utilize this slack to ensure that the measurement is received in time, although such an approach will only work with certainty if the communication delay is bounded by the available slack. 
Alternatively, the pursuer needs to increase the number of communications. 
It is not obvious how to choose the communication instances in presence of stochastic delays. 
Likewise, analyzing the effects of packet dropoutson the payoff is also an interesting research direction.

We notice that the evader is able to achieve an arbitrarily high payoff when the pursuer lets $\Pi$ to have a finite escape time. 
However, in order to achieve this arbitrarily high payoff, the evader needs apply a control input with arbitrary high magnitude, which is unrealistic from a practical point of view. 
Therefore, perhaps adding an upper bound on the magnitude of $\ue(t)$ (and $\up(t)$) is a more realistic scenario for this problem. 
In this case, the pursuer can ensure a finite payoff even when $\Pi$ has finite escape times. 
The received payoff will degrade if $\Pi$ is allowed to have finite escape times and this degradation in the payoff will be related to the inter-communication durations and the upper bound on the magnitude of $\ue$.


\bibliographystyle{IEEEtran}
\bibliography{reference}

\end{document}